
\documentclass[letterpaper,10pt,conference]{ieeeconf}

\IEEEoverridecommandlockouts

\overrideIEEEmargins

\usepackage{amsmath}
\usepackage{amssymb}
\usepackage{comment}
\usepackage{amsfonts}

\usepackage{amsthm}
\usepackage{cite}
\usepackage{graphicx}
\usepackage{hyperref}
\usepackage{booktabs}

\usepackage[ruled,vlined]{algorithm2e}

\setlength{\marginparwidth}{2cm}
\usepackage[color=green!40]{todonotes}

\usepackage{mathrsfs}

\theoremstyle{plain}

\newtheorem{remark}{Remark}

\newtheorem{lemma}{Lemma}

\theoremstyle{plain}
\newtheorem{theorem}{Theorem}
\theoremstyle{definition}
\newtheorem{definition}{Definition}
\theoremstyle{definition}

\theoremstyle{definition}
\newtheorem{problem}{Problem}

\title{\LARGE \bf
Mobility Equity and Economic Sustainability Using Game Theory
}

\author{Ioannis Vasileios Chremos, \textit{Student Member, IEEE}, and Andreas A. Malikopoulos, \textit{Senior Member, IEEE}%
\thanks{This work was supported by NSF under Grants CNS-2149520 and CMMI-2219761.}%
\thanks{The authors are with the Department of Mechanical Engineering, University of Delaware, Newark, DE 19716 USA. {\tt\small{\{ichremos,andreas\}@udel.edu.}}}%
}

\begin{document}

\maketitle
\thispagestyle{empty}
\pagestyle{empty}

\begin{abstract}

In this paper, we consider a multi-modal mobility system of travelers each with an individual travel budget, and propose a game-theoretic framework to assign each traveler to a ``mobility service" (each one representing a different mode of transportation). We are interested in equity and sustainability, thus we maximize the worst-case revenue of the mobility system while ensuring ``mobility equity," which we define it in terms of accessibility. In the proposed framework, we ensure that all travelers are truthful and voluntarily participate under informational asymmetry, and the solution respects the individual budget of each traveler. Each traveler may seek to travel using multiple services (e.g., car, bus, train, bike). The services are capacitated and can serve up to a fixed number of travelers at any instant of time. Thus, our problem falls under the category of many-to-one assignment problems, where the goal is to find the conditions that guarantee the stability of assignments. We formulate a linear program of maximizing worst-case revenue under the constraints of mobility equity, and we fully characterize the optimal solution.

\end{abstract}

\section{Introduction}


Commuters in big cities have continuously experienced the frustration of congestion and traffic jams. Travel delays, accidents, and road altercations have consistently impacted the economy, society, and the natural environment in terms of energy and pollution \cite{colini_baldeschi2017}. One of the pressing challenges of our time is the increasing demand for energy, which requires us to make fundamental transformations in how our societies use and access transportation. \emph{Emerging mobility systems} (EMS) (e.g., connected and automated vehicles (CAVs), shared mobility, ride-hailing, on-demand mobility services) are expected to eliminate congestion and significantly increase mobility efficiency in terms of energy and travel time. Several studies have shown the benefits of EMS to reduce energy and alleviate traffic congestion in a number of different transportation scenarios (see references therein \cite{zhao2019enhanced,Malikopoulos2020}). However, recently it was shown that when daily commuters were offered a convenient and affordable taxi service for their travels, a change of behavior was noticed, namely these commuters ended up using taxi services more often compared to when they drove their own car \cite{harb2018}. Other studies \cite{bissell2018,singleton2018} have shown similar results which is an indication that EMS could potentially affect people's tendency to travel and incentivize them to use cars more frequently which potentially can also lead to a shift away from public transit. 

Wide accessibility to transportation in EMS can be impacted by the socioeconomic background of the travelers, i.e., whether a traveler can afford it. For example, travelers with a low-income background may be unable to use any or all transportation travel options available in a city. Thus, our approach, in this paper, is to study the game-theoretic interactions of travelers seeking to travel in a multi-modal mobility system, where each traveler has a unique travel budget. We adopt the Mobility-as-a-Service (MaaS) concept, which is a system of multi-modal mobility that handles user-centric information and provides travel services (e.g., navigation, location, booking, payment) to a number of travelers. So, travelers are expected to report their preferences to a central authority. The goal is to guarantee mobility as a seamless service across all modes of transportation accessible to all in a socially-efficient and fair way.


One of the standard approaches to alleviate congestion in a transportation system has been the management of demand size due to the shortage of space availability and scarce economic resources in the form of congestion pricing. Such an approach focuses primarily on intelligent and scalable traffic routing, in which the objective is to guide and coordinate users in path-choice decision-making. For example, one computes the shortest path from a source to a destination regardless of the changing traffic conditions \cite{schmitt2006}. Interestingly, by adopting a game-theoretic approach, advanced systems have been proposed to assign users concrete routes or minimize travel time and study the Nash equilibria under different tolling mechanisms \cite{salazar2019,Chremos2020SocialDilemma,Chremos2020MechanismDesign,chremos2020SharedMobility}. Partly related to our work are matching models which describe markets in which there are agents of disjoint groups and have preferences regarding the ``goods" of the opposite agent they associate with. Two-sided matching with transfers have been modeled as assignment problems where one entity (e.g., a firm) needs to pay salaries to individuals (e.g., workers) \cite{shapley1971,kelso1982}. Tasks-matching problems under a wide range of constraints have been reported in \cite{Kojima2020}. A wider literature on matching under constraints can be found in \cite{Aziz2022}. Notable examples are mechanisms that assign students to schools, interns to hospitals, worker-firm contracts, or travelers to vehicles \cite{abdulkadiroglu2005,westkamp2013,chremos2021MobilityGame}. It is easy to see that matching markets are quite practical as they offer insights into the more general economic and behavioral real-life situations. These examples are all centralized approaches of determining who gets assigned to whom at what cost and benefit. Given the natural usefulness of matching markets, various extensions of the assignment game have been developed focusing either on different behavioral settings or information structures \cite{demange1986,sotomayor1992,anshelevich2013,chremos2020MobilityMarket}. Assignment games and matching markets have also been studied extensively in auction theory \cite{Myerson1981,bandi2014}.


The main contribution of this paper is the game-theoretic development of framework to study the socioeconomic interactions of travelers in a multi-modal mobility system. We focus our analysis in the \emph{economic sustainability} and \emph{mobility equity} of the mobility system. We offer a game-theoretic definition of equity based on accessibility, i.e., we ensure our framework satisfies the following properties: truthfulness, voluntary participation, and budget fairness. In particular, we formulate our problem as a linear program to compute the assignments between travelers and mobility services that maximize the worst-case revenue of the system. We consider that information is asymmetric, i.e., a social planner has no knowledge of the individual travelers' valuations of the services. We provide a pricing mechanism and show how we can elicit the private information truthfully (Theorem \ref{thm:IC}) while ensuring budget fairness (Theorem \ref{thm:ME}). We also show that every traveler voluntary participates in our proposed framework (Theorem \ref{thm:IR}).


The remainder of the paper is structured as follows. In Section \ref{Section:Formulation}, we present the mathematical formulation of the proposed game-theoretic framework. In Section \ref{Section:Analysis&Properties}, we derive the theoretical properties of our framework, and finally, in Section \ref{Section:Conclusion}, we discuss the implementation of the proposed framework, draw conclusions and future directions.

\section{Modeling Framework}
\label{Section:Formulation}

\subsection{Mathematical Formulation}

We consider a mobility system where $I \in \mathbb{N}_{\geq 2}$ travelers, indexed by $i \in \mathcal{I}$, $|\mathcal{I}| = I$, are interested in using $J \in \mathbb{N}$ mobility services, indexed by $j \in \mathcal{J}$, made available by a \emph{social planner}. In addition, we expect $I < J$.
Any $j \in \mathcal{J}$ represents the service that can be offered to traveler $i$. For example, a taxicab service, say some $j \in \mathcal{J}$, can satisfy the travel needs of up to five travelers; thus, any service $j$ can be divided to multiple travelers based on the service $j$'s physical capacity. Each traveler $i \in \mathcal{I}$ has a private valuation $v_{i j}$ associated with each of the services $j \in \mathcal{J}$, which is not known to the social planner.

Travelers are constrained by a \emph{travel budget}, $b_i \in \mathbb{R}_{\geq 0}$, for any traveler $i \in \mathcal{I}$. Thus, we can only charge travelers payments that do not exceed their individual budgets. We write $\mathcal{B} = \{b_1, b_2, \dots, b_I\}$. For the purposes of this work, we assume that the budgets of each traveler are known to the social planner. Our reasoning here is twofold: A probabilistic distribution for unknown private budgets leads to an impossibility result for socially-efficient mechanisms \cite{Dobzinski2008,Dobzinski2012}. In addition, based on transportation literature, it is reasonable to expect travelers to submit their travel budget on a mobility app \cite{Goodwin1981,Tussyadiah2006}.

For each service $j \in \mathcal{J}$, we model the social planner's beliefs on the realization of the private valuations for service $j$ as real values from some subset of real values.

\begin{definition}
    For each traveler $i \in \mathcal{I}$, the traveler $i$'s valuation profile of all mobility services is $v_i = (v_{i 1}, v_{i 2}, \dots, v_{i J})$, $v_{i j} \in \mathbb{R}$. We write $v_{- i j} = (v_{1 j}, \dots, v_{(i - 1) j}, v_{(i + 1) j}, \dots, v_{I j})$ for the valuation profile of all travelers except $i$ for service $j$ and denote by $v_{- i} = (v_{- i 1}, \dots, v_{- i J})$ the profile of valuations of all services of all travelers except traveler $i$. Then, $v = (v_i, v_{- i}) \in \mathcal{V} \subset \mathbb{R} ^ {I \times J}$ is the valuation profile of all travelers for all mobility services.
\end{definition}

For an arbitrary traveler $i$, the valuation $v_{i j}$ can represent the realization of a satisfaction function that captures, for example, the maximum amount of money that traveler $i$ is willing to pay for mobility service $j$.

Travelers may use multiple services to satisfy their travel needs, i.e., to reach their destination, via a smartphone app. The social planner then compiles all travelers' origin-destination requests and other information (e.g., preferred travel time, value of time, and maximum willingness-to-pay) in order to provide a travel recommendation to each traveler. However, we consider that the travelers' budgets are known to the social planner as it is reasonable to expect travelers to submit their travel budget on a mobility app \cite{Goodwin1981,Tussyadiah2006}. The travelers' valuations for each different mobility service are considered private information as we cannot expect travelers to provide truthfully their preferences for any service.

The allocation of the finite number of mobility services to travelers can be described by a vector of binary variables.

\begin{definition}\label{defn:traveler_service_assign}
    The \emph{traveler-service assignment} is a vector $\mathbf{a} = (a_{i j}(v))_{{i \in \mathcal{I}}, j \in \mathcal{J}}$, where $a_{i j}$ is a binary variable, i.e.,
        \begin{equation}\label{EQN:binary-variable}
            a_{i j}(v) =
                \begin{cases}
                    1, \; & \text{if $i \in \mathcal{I}$ is assigned to $j \in \mathcal{J}$}, \\
                    0, \; & \text{otherwise}.
                \end{cases}
        \end{equation}
\end{definition}

Note that the assignment $a_{i j}(v)$ between traveler $i$ and service $j$ depends on the valuation $v_i$ of traveler $i$ and the valuations of all other travelers, i.e., $v_{- i}$.

It is possible in our framework for a traveler to reject all assignments with any service. However, we show in Theorem \ref{constraint:IR} how to avoid such an outcome by providing the right incentives to travelers to use at least one service. Naturally, though, each service can accommodate up to a some number of travelers, different for each type of services. So, we expect the ``physical traveler capacity" of each service to vary significantly. 

\begin{definition}
    Each service $j \in \mathcal{J}$ is associated with a current \emph{traveler capacity}, denoted by $\varepsilon_j \in \mathbb{N}$ and $\varepsilon_j \leq \bar{\varepsilon}_j$, where $\bar{\varepsilon}_j$ denotes the maximum traveler capacity of service $j$.
\end{definition}

For example, a bus can provide travel services to a hundred travelers (seated and standing) compared to a bike-sharing company's bike (since one traveler per bike).

\begin{definition}
    A \emph{feasible} assignment is a vector $\mathbf{a} = (a_{i j})_{{i \in \mathcal{I}}, j \in \mathcal{J}}$, $a_{i j} \in \{0, 1\}$ that satisfies 
        \begin{align}
            \sum_{j \in \mathcal{J}} a_{i j}(v) & \leq \delta_i, \quad \forall i \in \mathcal{I}, \quad \forall v \in \mathcal{V}, \label{Constraint:Problem1-First} \\
            \sum_{i \in \mathcal{I}} a_{i j}(v) & \leq \bar{\varepsilon}_j, \quad \forall j \in \mathcal{J}, \quad \forall v \in \mathcal{V}, \label{Constraint:Problem1-Second}
        \end{align}
    where \eqref{Constraint:Problem1-First} ensures that each traveler $i \in \mathcal{I}$ is assigned to at most $\delta_i \in \mathbb{N}$ mobility service $j \in \mathcal{J}$, and \eqref{Constraint:Problem1-Second} ensures that the traveler capacity of each service $j$ is not exceeded while it is shared by multiple travelers.
\end{definition}

We represent the preferences of each traveler with a utility function consisted of two parts: the traveler's valuation of the mobility outcome and the associated payment required for the realization of that outcome. In other words, any traveler is expected to pay a fare/toll for using the mobility service.

\begin{definition}\label{defn:utility_function}
    Each traveler $i$'s preferences are summarized by a utility function $u_i : \mathcal{V} \times \mathbb{R} \to \mathbb{R}$ that determines the monetary value of the overall payoff realized by traveler $i$ from their assignment to service $j$. Thus, traveler $i$ receives a total utility
        \begin{equation}\label{eqn:traveler_utility}
            u_i((v_i, v_{- i}), p_i) = \sum_{j \in \mathcal{J}} v_{i j} a_{i j}(v_i, v_{- i}) - p_i(v_i, v_{- i}),
        \end{equation}
    where $p_i \in \mathbb{R}$ denotes traveler $i$'s mobility payment.
\end{definition}

Note that each traveler's goal is to choose a strategy that maximizes their utility only. Next, we present the definition of ``mobility equity" of our game-theoretic framework.

\begin{definition}\label{defn:equity}
    A mobility system $\langle \mathcal{I}, \mathcal{J}, \mathcal{V}, (u_i)_{i \in \mathcal{I}}, (p_i)_{i \in \mathcal{I}} \rangle$ admits an equilibrium that is \emph{mobility equitable} if (i) travelers truthfully report their private information, (ii) travelers voluntarily participate, and (iii) travelers can afford travel.
\end{definition}

Next, we formally define the relation that ensures ``economic sustainability" for our game-theoretic framework.

\begin{definition}\label{defn:SUSTAIN}
    Let linear function $w : \mathcal{V}\times \mathbb{R} \to \mathbb{R}_{\geq 0}$ that depends on the valuations, assignments, and individual budgets of all the travelers denote the worst-case revenue. Mathematically, we have
        \begin{equation}\label{eqn:mobility-equity}
            \sum_{i \in \mathcal{I}} \sum_{j \in \mathcal{J}} w_i(v_{i j}, a_{i j}, b_i) \leq \sum_{i \in \mathcal{I}} p_i(v), \quad \forall v \in \mathcal{V}.
        \end{equation}
\end{definition}

Our intuition behind Definition \ref{defn:SUSTAIN} is conceptually based on what the United Nations Development Programme has developed as part of their Sustainable Development Goals. In particular, our goal in this work is to ensure long-term economic growth in the worst possible cases (thus, maximizing \eqref{eqn:mobility-equity}) under the constraints of Definition \ref{defn:equity}. We now define the constraints that ensure mobility equity in our framework's solutions based on Definition \ref{defn:equity}. 

\begin{definition}\label{defn:IC}
    For the travelers to have no incentive to misreport their valuations to the social planner, we need
        \begin{multline}\label{constraint:IC}
            \sum_{j \in \mathcal{J}} v_{i j} a_{i j}(\tilde{v}_i, v_{- i}) - p_i(\tilde{v}_i, v_{- i}) \\
            - \sum_{j \in \mathcal{J}} v_{i j} a_{i j}(v_i, v_{- i}) + p_i(v_i, v_{- i}) \leq 0,
        \end{multline}
    for all $v = (v_i, v_{- i}) \in \mathcal{V}$, any $\tilde{v}_i \in \mathcal{V}$, and for all travelers $i \in \mathcal{I}$ using any mobility service $j \in \mathcal{J}$. We call $\tilde{v}_i$ traveler $i$'s reported valuation that deviates from the true valuation $v_i$. If \eqref{constraint:IC} holds, we say that the mechanism induces \emph{truthfulness}.
\end{definition}

\begin{definition}\label{defn:IR}
    The travelers in the mobility system \emph{voluntarily participate} (VP) if, for any traveler $i \in \mathcal{I}$,
        \begin{equation}\label{constraint:IR}
            p_i(v_i, v_{- i}) \leq \sum_{j \in \mathcal{J}} v_{i j} a_{i j}(v_i, v_{- i}), \quad \forall v \in \mathcal{V}.
        \end{equation}
    We say then that the proposed mechanism induces voluntary participation from all travelers.
\end{definition}

\begin{definition}\label{defn:ME}
    The mechanism induces on individual level \emph{budget fairness} (BF), if for any traveler $i \in \mathcal{I}$, we have
        \begin{equation}\label{constraint:ME}
            p_i(v) \leq b_i, \quad \forall v \in \mathcal{V}.
        \end{equation}
\end{definition}

\subsection{The Optimization Problem}

\begin{problem}\label{pro:primal}
    The maximization problem is formulated as
        \begin{gather}
            \max_{a_{i j}} \sum_{i \in \mathcal{I}} \sum_{j \in \mathcal{J}} w_i(v_{i j}, a_{i j}, b_i), \label{Equation:Problem1-ObjectiveFunction} \\
            \text{subject to:} \notag \;
            \eqref{Constraint:Problem1-First},
            \eqref{Constraint:Problem1-Second},
            \eqref{eqn:mobility-equity},
            \eqref{constraint:IC},
            \eqref{constraint:IR},
            \eqref{constraint:ME},
        \end{gather}
    where $a_{i j} \in \{0, 1\}$ for all $i \in \mathcal{I}$ and all $j \in \mathcal{J}$.
\end{problem}

We note here that Problem \ref{pro:primal} is a special case of the many-to-many assignment problem that is known to be very hard to solve analytically. Thus, we relax the integer constraint and focus our analysis on deriving the optimal solutions of a linearized version of Problem \ref{pro:primal}. Thus, we introduce a non-negativity constraint variable as follows.

\begin{problem}\label{prob:linear}
    The linear program formulation is
        \begin{gather}
            \max_{a_{i j}} \sum_{i \in \mathcal{I}} \sum_{j \in \mathcal{J}} w_i(v_{i j}, a_{i j}, b_i) \label{Equation:Problem2-ObjectiveFunction} \\
            \text{subject to:} \notag \;
            \eqref{Constraint:Problem1-First},
            \eqref{Constraint:Problem1-Second},
            \eqref{eqn:mobility-equity}, \eqref{constraint:IC},
            \eqref{constraint:IR},
            \eqref{constraint:ME},
            \text{ and} \\
            a_{i j} \geq 0, \quad \forall i \in \mathcal{I}, \quad \forall j \in \mathcal{J}, \label{Constraint:Problem2-Third}
        \end{gather}
    where \eqref{Constraint:Problem2-Third} transforms the (binary) assignment problem to a (continuous) linear program.
\end{problem}

\begin{remark}
    Intuitively, in Problem \ref{prob:linear}, if $a_{i j}>0$ it implies that traveler $i \in \mathcal{I}$ is assigned to service $j \in \mathcal{J}$.
\end{remark}

Problem \ref{prob:linear} is a constrained linear maximization problem that admits at least one solution under certain conditions. A solution of Problem \ref{prob:linear} ensures the assignments between the travelers and services are mobility equitable and economic sustainable. We maximize the worst-case revenue of the mobility system under the constraints of truthfulness, VP, and BF. Next, inspired from Myerson's auction \cite{Myerson1981} and the Vickrey-Clarke-Groves (VCG) auction mechanism, we introduce two key variables that can help us solve Problem \ref{prob:linear}, i.e., \emph{nominal assignments} and \emph{reservation payments}.

\begin{definition}
    For any traveler $i \in \mathcal{I}$, there is a reservation payment for each mobility service $j \in \mathcal{J}$, denoted by $r_{i j} \in \mathbb{R}_{\geq 0}$, representing the minimum necessary mobility payment of traveler $i$ to get assigned to mobility service $j$.
\end{definition}

\begin{definition}\label{defn:final_assignment}
    The final assignment $a_{i j}(v)$ evaluated at the realized valuation profile $v \in \mathcal{V}$ is computed as the sum of the nominal assignment $\bar{a}_{i j}$ and the adapted assignment $\tilde{a}_{i j}(v)$, i.e., we have $a_{i j}(v_i, v_{- i}) = \bar{a}_{i j} + \tilde{a}_{i j}(v_i, v_{- i})$.
\end{definition}

\section{Analysis and Properties of the Mechanism}
\label{Section:Analysis&Properties}

In this section, we show that the proposed mechanism satisfies the desired properties of mobility equity (Definition \ref{defn:equity}). We start by stating the dual program of Problem \ref{prob:linear}.

\begin{lemma}\label{lem:dual_problem}
    The dual problem of Problem \ref{prob:linear} is
        \begin{gather}
            \min \sum_{v \in \mathcal{V}} \left[ \sum_{i \in \mathcal{I}} \xi_1 ^ i(v) \delta_i + \sum_{j \in \mathcal{J}} \xi_2 ^ j (v) \bar{\varepsilon}_j + \sum_{i \in \mathcal{I}} \xi_4 ^ i (v) b_i \right] \\
            \text{subject to: } \xi_1 ^ i (v) + \xi_2 ^ j (v) + \sum_{\tilde{v}_i \in \mathcal{V}} \tilde{v}_{i j} \xi_4 ^ i (v, \tilde{v}_i) \notag \\
            - v_{i j} \sum_{\tilde{v}_i} \xi_4 ^ i (v, \tilde{v}_i) - v_{i j} \xi_5 ^ i (v) \geq 0, \quad \forall v \in \mathcal{V}, \\
            \sum_{\tilde{v}_i \in \mathcal{V}} \xi_4 ^ i (v, \tilde{v}_i) - \sum_{\tilde{v}_i \in \mathcal{V}} \xi_4 ^ i (v_{- i}, \tilde{v}_i, v_i) - \xi_3 (v) \notag \\
            + \xi_5 ^ i(v) + \xi_6 ^ i (v) = 0, \quad \forall v \in \mathcal{V}, \\
            \sum_{v \in \mathcal{V}} \xi_3 (v) = 1, \\
            \xi_1(v), \xi_2 ^ j (v), \xi_3, \xi_4 ^ i (v, \tilde{v}_i), \xi_5 ^ i (v), \xi_6 ^ i (v) \geq 0,
        \end{gather}
    where $\xi_1 ^ i, \xi_2 ^ j, \xi_3, \xi_4 ^ i, \xi_5 ^ i$, and $\xi_6 ^ i$ are the dual variables for constraints \eqref{Constraint:Problem1-First}, \eqref{Constraint:Problem1-Second}, \eqref{eqn:mobility-equity}, \eqref{constraint:IC}, \eqref{constraint:IR}, and \eqref{constraint:ME}, respectively.
\end{lemma}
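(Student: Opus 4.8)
The plan is to derive the dual by the standard Lagrangian / coefficient-matching recipe for linear programs, so that the whole argument reduces to careful bookkeeping once the primal is written in canonical form. First I would recast Problem~\ref{prob:linear} with an explicit list of decision variables: the assignments $a_{ij}(v)\ge 0$ for every $i\in\mathcal{I}$, $j\in\mathcal{J}$, $v\in\mathcal{V}$; the payments $p_i(v)$, which are unrestricted in sign; and a single auxiliary scalar $W:=\sum_{i}\sum_{j} w_i(v_{ij},a_{ij},b_i)$ (which we may treat as unrestricted in sign) representing the guaranteed worst-case revenue, so that the objective reads $\max W$ and the economic-sustainability relation \eqref{eqn:mobility-equity} becomes the family of epigraph constraints $W\le\sum_{i}p_i(v)$, one per $v\in\mathcal{V}$. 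I would then reorient every constraint into $\le$ form (rewriting \eqref{constraint:IR} and \eqref{constraint:ME} accordingly) and attach the nonnegative multipliers exactly as listed: $\xi_1^i(v)$ to \eqref{Constraint:Problem1-First}, $\xi_2^j(v)$ to \eqref{Constraint:Problem1-Second}, $\xi_3(v)$ to the epigraph form of \eqref{eqn:mobility-equity}, $\xi_4^i(v,\tilde v_i)$ to each instance of \eqref{constraint:IC}, $\xi_5^i(v)$ to \eqref{constraint:IR}, and $\xi_6^i(v)$ to \eqref{constraint:ME}.

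Next I would read off the dual by the usual rules. The dual objective is the inner product of each multiplier with the right-hand side of its primal constraint; since only \eqref{Constraint:Problem1-First}, \eqref{Constraint:Problem1-Second}, and the budget-fairness constraint \eqref{constraint:ME} carry nonzero right-hand sides $\delta_i$, $\bar\varepsilon_j$, and $b_i$, these are precisely the three terms appearing in the stated dual objective. The dual constraints then come from imposing, for each primal variable, the complementary sign condition: because $a_{ij}(v)\ge 0$ its aggregated coefficient must be $\ge 0$, giving the first stated dual inequality, while because $p_i(v)$ is free its aggregated coefficient must vanish, giving the stated dual equality. Finally, $W$ enters the objective with coefficient $1$ and each epigraph constraint with coefficient $1$; matching its coefficient forces $\sum_{v}\xi_3(v)=1$, which is exactly the normalization in the lemma, and nonnegativity of all inequality multipliers completes the dual feasible region.

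The one genuinely delicate step is collecting the coefficients of $a_{ij}(v)$ and $p_i(v)$ across the incentive-compatibility family \eqref{constraint:IC}, since a single allocation or payment variable appears in many such constraints in two distinct roles. Fixing $v=(v_i,v_{-i})$, the payment $p_i(v)$ enters the constraints indexed by $(v,\tilde v_i)$ as the \emph{truthful} term $+p_i(v_i,v_{-i})$ for every $\tilde v_i$, and it enters the constraints whose true profile is $(\hat v_i,v_{-i})$ and whose report equals $v_i$ as the \emph{misreport} term $-p_i(v_i,v_{-i})$ for every honest $\hat v_i$; summing these two families yields exactly the paired difference $\sum_{\tilde v_i}\xi_4^i(v,\tilde v_i)-\sum_{\tilde v_i}\xi_4^i(v_{-i},\tilde v_i,v_i)$ of the dual equality, after which $-\xi_3(v)$, $+\xi_5^i(v)$, and $+\xi_6^i(v)$ are read directly off \eqref{eqn:mobility-equity}, \eqref{constraint:IR}, and \eqref{constraint:ME}. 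The identical accounting for $a_{ij}(v)$, now with the valuation weights $v_{ij}$ and $\hat v_{ij}$ attached to the truthful and misreport terms, produces the $\sum_{\tilde v_i}\tilde v_{ij}\xi_4^i(\cdot)-v_{ij}\sum_{\tilde v_i}\xi_4^i(v,\tilde v_i)$ pair together with $-v_{ij}\xi_5^i(v)$ from \eqref{constraint:IR}. Correctly tracking these index permutations (the $(v_{-i},\tilde v_i,v_i)$-type argument shift) and their valuation weights is the crux; everything else is mechanical, and once the coefficient conditions for $a_{ij}(v)$, $p_i(v)$, and $W$ are assembled, LP strong duality certifies that the stated minimization is the dual of Problem~\ref{prob:linear}.
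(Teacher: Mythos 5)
Your approach is the right one, and it actually supplies more than the paper does: the paper's ``proof'' of Lemma~\ref{lem:dual_problem} is a single sentence declaring the computations straightforward and omitting them. The mechanical recipe you describe --- one multiplier per constraint instance, dual objective from the primal right-hand sides, one dual constraint per primal variable with an inequality for $a_{ij}(v)\ge 0$ and an equality for the sign-free $p_i(v)$ --- is exactly the computation being waved at, and you correctly isolate the only delicate step, namely that a fixed $p_i(v)$ (resp.\ $a_{ij}(v)$) appears in the incentive-compatibility family \eqref{constraint:IC} once as the truthful term for every report $\tilde v_i$ and once as the reported term for every true profile $(\hat v_i,v_{-i})$, which is what generates the paired sums $\sum_{\tilde v_i}\xi_4^i(v,\tilde v_i)-\sum_{\tilde v_i}\xi_4^i(v_{-i},\tilde v_i,v_i)$. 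Your epigraph reformulation with the auxiliary scalar $W$ is also a necessary clarification the paper never states: the objective \eqref{Equation:Problem2-ObjectiveFunction} and the left-hand side of \eqref{eqn:mobility-equity} are the same expression, and the normalization $\sum_{v}\xi_3(v)=1$ and the absence of any objective coefficient on the right of the first dual inequality only come out if the objective is replaced by a single free variable bounded by $\sum_i p_i(v)$ for each $v$. Since that reformulation is not an innocent relabeling (it decouples $W$ from the $a_{ij}$'s), you should state it explicitly as the primal you are dualizing.

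One claim in your write-up does not survive your own bookkeeping. You assert that the three nonzero right-hand sides $\delta_i$, $\bar\varepsilon_j$, $b_i$ give ``precisely the three terms appearing in the stated dual objective,'' but under your (and the lemma's) assignment of multipliers the budget constraint \eqref{constraint:ME} carries $\xi_6^i(v)$, so the derivation produces $\sum_i\xi_6^i(v)\,b_i$, whereas the lemma writes $\sum_i\xi_4^i(v)\,b_i$ with $\xi_4$ attached to the zero-right-hand-side constraint \eqref{constraint:IC}. Either the lemma statement has a typo or the multipliers are labeled inconsistently; a complete proof must say which, rather than assert exact agreement with the displayed dual. With that discrepancy flagged and the $W$-reformulation made explicit, the rest of your argument is the standard LP-duality verification and goes through.
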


\begin{proof}
    The computations here are straightforward, hence we omit them due to space limitations.
\end{proof}

Based on Lemma \ref{lem:dual_problem}, we can now compute the nominal assignments and reservation payments as follows: we formulate the optimization problem for the assignments
    \begin{gather}
        \max_{a_{i j}} \sum_{i \in \mathcal{I}} \sum_{j \in \mathcal{J}} v_{i j} a_{i j} \label{Equation:Problem4-ObjectiveFunction} \\
        \text{subject to:} \notag \;
        \eqref{Constraint:Problem1-First},
        \eqref{Constraint:Problem1-Second},
        \text{ and} \\
        v_{i j} a_{i j}(\tilde{v}_i, v_{- i}) - v_{i j} a_{i j}(v_i, v_{- i}) \leq 0, \\
        \forall \tilde{v}_i \in \mathcal{V}, \quad \forall i \in \mathcal{I}, \quad \forall j \in \mathcal{J}, \notag \\
        \sum_{j \in \mathcal{J}} v_{i j} a_{i j}(v_i, v_{- i}) \leq b_i, \quad \forall i \in \mathcal{I}, \\
        a_{i j} \geq 0, \quad \forall i \in \mathcal{I}, \quad \forall j \in \mathcal{J}, \label{Constraint:Problem4-Third}
    \end{gather}
where solving \eqref{Equation:Problem4-ObjectiveFunction} gives us the valuation profile $v ^ {\text{worst}} = (v_{i j} ^ {\text{worst}})_{i \in \mathcal{I}, {j \in \mathcal{J}}}$ at the worst case and the associated nominal assignment $\bar{a}_{i j}$. Next, we derive $\xi_1 ^ i$, $\xi_2 ^ j$, $\xi_5 ^ i$, and $\xi_6 ^ i$ from Lemma \ref{lem:dual_problem} and then compute $r_{i j} = \xi_1 ^ i + \xi_2 ^ j + \xi_5 ^ i v_{i j} ^ {\text{worst}} + \xi_6 ^ i v_{i j} ^ {\text{worst}}$. The next step now is to present the pricing mechanism for any traveler $k \in \mathcal{I}$ of our proposed framework. But first, we define $\gamma_{i j} = \arg \min_{\tilde{v} \in \mathcal{V}} \sum_{j \in \mathcal{J}} \bar{a}_{i j} \tilde{v}_{i j}$. Then, we have
    \begin{multline}\label{eqn:mobility_payment}
        p_k(v) = \sum_{j \in \mathcal{J}} \tilde{a}_{k j} (v) r_{k j} + \sum_{j \in \mathcal{J}} \bar{a}_{k j} r_{k j} - \sum_{j \in \mathcal{J}} \bar{a}_{k j} \xi_5 ^ k \gamma_{k j} \\
        + \sum_{i \in \mathcal{I} \setminus \{k\}} \sum_{j \in \mathcal{J}} \tilde{a}_{i j; k} (v_{- k}) (v_{i j} - r_{i j}) \\
        - \sum_{i \in \mathcal{I} \setminus \{k\}} \sum_{j \in \mathcal{J}} \tilde{a}_{i j} (v) (v_{i j} - r_{i j}),
    \end{multline}
where $\tilde{a}_{i j; k}$ represents a ``temporary" assignment of travelers to mobility services expecting traveler $k$ (we see how to estimate this variable in \eqref{eqn:estimate:a^v-k}). We formally present how to compute such an assignment in Theorem \ref{thm:ME}.
The term $\sum_{i \in \mathcal{I}} \sum_{j \in \mathcal{J}} \tilde{a}_{i j} (v) (v_{i j} - r_{i j})$ represents the ``social welfare" of all travelers based on the valuations of each mobility service $j$ and the reservation mobility payments $r_{i j}$. We motivate our mobility pricing mechanism \eqref{eqn:mobility_payment} as follows: with the help of the reservation payments we parameterize the totals of social welfare in terms of the travelers' valuations. So, the first three terms capture the parameterized social welfare for all services from the point of view of one traveler. Then the other two terms represent the social welfare excluding traveler $k$'s contribution.
Using these reservation payments, we then introduce a mobility payment $p_k$ for traveler $k$ that charges the minimum required payment for traveler $k$ to get assigned to mobility service $j$ while keeping all other travelers' reported valuations fixed. Next, we show that our mechanism induces truthfulness from all travelers.

\begin{theorem}\label{thm:IC}
    The proposed framework induces all travelers to report their valuations $v \in \mathcal{V}$ truthfully to the social planner under the pricing mechanism \eqref{eqn:mobility_payment}.
\end{theorem}

\begin{proof}
    Consider traveler $k$ with a true valuation $v_{k j}$ for each service $j \in \mathcal{V}$. By reporting $v_{k j} '$, traveler $k$ is assigned service $j$ with $\tilde{a}_{k j}(v_k ', v_{- k})$. We formulate the optimization problem $(\tilde{a}_{i j}(v))_{i \in \mathcal{I}, j \in \mathcal{J}} = \arg \max_{\tilde{a} \in \mathcal{A}} \sum_{i \in \mathcal{I}} \sum_{j \in \mathcal{J}} \tilde{a}_{i j} (v_{i j} - r_{i j})$, where $\mathcal{A}$ is the set of positive values for $\tilde{a}$ that satisfies the following two constraints:
        \begin{align}
            \sum_{i \in \mathcal{I}} \tilde{a}_{i j} & \leq 1 - \sum_{i \in \mathcal{I}} \bar{a}_{i j}, \quad \forall j \in \mathcal{J}, \\
            \sum_{j \in \mathcal{J}} \tilde{a}_{i j} \tilde{v}_{i j} & \leq b_i - \sum_{j \in \mathcal{J}} \bar{a}_{i j} r_{i j} + \sum_{j \in \mathcal{J}} \bar{a}_{k j} \xi_5 ^ i \gamma_{i j}, \label{eqn:necessary_condition}
        \end{align}
    where \eqref{eqn:necessary_condition} must hold for all $i \in \mathcal{I}$, for all $\tilde{v} \in \mathcal{V}$, and $\gamma_{i j} = \arg \min_{\tilde{v} \in \mathcal{V}} \sum_{j \in \mathcal{J}} \bar{a}_{i j} \tilde{v}_{i j}$. Since $\mathcal{A}$ does not depend on any specific valuation profile, we have $(\tilde{a}_{i j}(v_k ', v_{- k}))_{i j} \in \mathcal{A}$. Thus, we have $\sum_{i \in \mathcal{I}} \sum_{j \in \mathcal{J}} \tilde{a}_{i j}(v_k, v_{- k}) (v_{i j} - r_{i j}) \geq \sum_{i \in \mathcal{I}} \sum_{j \in \mathcal{J}} \tilde{a}_{i j}(v_k ', v_{- k}) (v_{i j} - r_{i j})$. Using Definition \ref{defn:utility_function}, we now compare the utilities of traveler $k$ under the two different valuations. So, we have $u_k(v_k, v_{- k}) = \sum_{j \in \mathcal{J}} a_{k j}(v_k, v_{- k}) v_{k j} - p_k(v_k, v_{- k})$, which, by Definition \ref{defn:final_assignment} and \eqref{eqn:mobility_payment}, we can expand as follows
        \begin{multline}
            u_k(v_k, v_{- k}) = \sum_{j \in \mathcal{J}} \tilde{a}_{k j}(v_k, v_{- k}) v_{k j} + \sum_{j \in \mathcal{J}} \bar{a}_{k j} v_{k j} \\
            - \sum_{j \in \mathcal{J}} \tilde{a}_{k j} r_{k j} - \sum_{j \in \mathcal{J}} \bar{a}_{k j} r_{k j} + \sum_{j \in \mathcal{J}} \bar{a}_{k j} \xi_5 ^ k \gamma_{k j} \\
            - \sum_{i \in \mathcal{I} \setminus \{k\}} \sum_{j \in \mathcal{J}} \tilde{a}_{i j; k}(v_{- k}) (v_{i j} - r_{i j}) \\
            + \sum_{i \in \mathcal{I} \setminus \{k\}} \sum_{j \in \mathcal{J}} \tilde{a}_{i j}(v_k, v_{- k}) (v_{i j} - r_{i j})
        \end{multline}
        \begin{multline}
             = \sum_{i \in \mathcal{I}} \sum_{j \in \mathcal{J}} \tilde{a}_{i j}(v_k, v_{- k}) (v_{i j} - r_{i j}) - \\
            \sum_{i \in \mathcal{I} \setminus \{k\}} \sum_{j \in \mathcal{J}} \tilde{a}_{i j; k}(v_{- k}) (v_{i j} - r_{i j}) \\
            + \sum_{j \in \mathcal{J}} \bar{a}_{k j} v_{k j} - \sum_{j \in \mathcal{J}} \bar{a}_{k j} r_{k j} + \sum_{j \in \mathcal{J}} \bar{a}_{k j} \xi_5 ^ k \gamma_{k j}
        \end{multline}
        \begin{multline}
            \geq \sum_{i \in \mathcal{I}} \sum_{j \in \mathcal{J}} \tilde{a}_{i j}(v_k ', v_{- k}) (v_{i j} - r_{i j}) \\
            - \sum_{i \in \mathcal{I} \setminus \{k\}} \sum_{j \in \mathcal{J}} \tilde{a}_{i j; k}(v_{- k}) (v_{i j} - r_{i j}) \\
            + \sum_{j \in \mathcal{J}} \bar{a}_{k j} v_{k j} - \sum_{j \in \mathcal{J}} \bar{a}_{k j} r_{k j} + \sum_{j \in \mathcal{J}} \bar{a}_{k j} \xi_5 ^ k \gamma_{k j}
        \end{multline}
        \begin{multline}\label{eqn:thmIC_last}
            = \sum_{j \in \mathcal{J}} \tilde{a}_{k j}(v_k ', v_{- k}) v_{k j} + \sum_{j \in \mathcal{J}} \bar{a}_{k j} v_{k j} \\
            - \sum_{j \in \mathcal{J}} \tilde{a}_{k j}(v_k ', v_{- k}) r_{k j} - \sum_{j \in \mathcal{J}} \bar{a}_{k j} r_{k j} \\
            - \sum_{i \in \mathcal{I} \setminus \{k\}} \sum_{j \in \mathcal{J}} \tilde{a}_{i j}(v_{- k}) (v_{i j} - r_{i j}) \\
            + \sum_{i \in \mathcal{I} \setminus \{k\}} \sum_{j \in \mathcal{J}} \tilde{a}_{i j}(v_k, v_{- k}) (v_{i j} - r_{i j}),
        \end{multline}
    where the last equality \eqref{eqn:thmIC_last} follows by simple rearrangement using \eqref{eqn:mobility_payment}; therefore, \eqref{eqn:thmIC_last} is equal to $u_k(v_k ', v_{- k})$.
\end{proof}

\begin{theorem}\label{thm:ME}
    The proposed framework ensures that no traveler pays more than their budget for their assignment, i.e., under the pricing mechanism \eqref{eqn:mobility_payment}, for any traveler $i \in \mathcal{I}$, we have $p_i(v) \leq b_i$, for all $v \in \mathcal{V}$.
\end{theorem}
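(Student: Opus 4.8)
The plan is to substitute the pricing rule \eqref{eqn:mobility_payment} into the claimed inequality $p_k(v)\le b_k$ and to separate the terms depending only on traveler $k$ from the ``externality'' terms involving the remaining travelers $i\in\mathcal{I}\setminus\{k\}$. First I would collect the three self-terms $\sum_{j}\tilde a_{kj}(v)r_{kj}+\sum_j\bar a_{kj}r_{kj}-\sum_j\bar a_{kj}\xi_5^k\gamma_{kj}$ and the two cross-terms $\sum_{i\ne k}\sum_j[\tilde a_{ij;k}(v_{-k})-\tilde a_{ij}(v)](v_{ij}-r_{ij})$, so that the budget inequality decouples into a bound on the self-terms and a bound on the cross-terms.

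The self-terms I would control using the feasibility constraint \eqref{eqn:necessary_condition} that is embedded in the set $\mathcal{A}$ over which the adapted assignments are optimized. Since the realized adapted assignment lies in $\mathcal{A}$ (Definition \ref{defn:final_assignment}), specializing \eqref{eqn:necessary_condition} to $i=k$ gives $\sum_j\tilde a_{kj}(v)\tilde v_{kj}+\sum_j\bar a_{kj}r_{kj}-\sum_j\bar a_{kj}\xi_5^k\gamma_{kj}\le b_k$ for every $\tilde v\in\mathcal{V}$. I would then argue that on the assigned services the reservation payments $r_{kj}$ do not exceed the valuations entering this constraint, which is exactly the reservation/participation property guaranteed by the dual construction $r_{kj}=\xi_1^k+\xi_2^j+(\xi_5^k+\xi_6^k)v_{kj}^{\text{worst}}$ coming from Lemma \ref{lem:dual_problem}. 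This lets me replace $\sum_j\tilde a_{kj}(v)\tilde v_{kj}$ by $\sum_j\tilde a_{kj}(v)r_{kj}$ without increasing the left-hand side, so the self-terms are at most $b_k$.

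It remains to show the cross-terms add no positive mass, i.e.\ $\sum_{i\ne k}\sum_j\tilde a_{ij;k}(v_{-k})(v_{ij}-r_{ij})\le\sum_{i\ne k}\sum_j\tilde a_{ij}(v)(v_{ij}-r_{ij})$. Here I would use that both $\tilde a_{\cdot\cdot;k}(v_{-k})$ and $\tilde a_{\cdot\cdot}(v)$ maximize the same parameterized social-welfare objective $\sum_i\sum_j\tilde a_{ij}(v_{ij}-r_{ij})$ over $\mathcal{A}$, one computed anticipating traveler $k$ and one with the realized profile $v$, together with the complementary-slackness relations from Lemma \ref{lem:dual_problem} that tie the sign of $v_{ij}-r_{ij}$ to the active capacity constraints. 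This should let the externality be rewritten as the marginal welfare that $k$'s presence frees for the others, which is nonpositive once the reservation payments have been netted out. Combining the two bounds yields $p_k(v)\le b_k$ for all $v\in\mathcal{V}$ and all $k\in\mathcal{I}$, which is \eqref{constraint:ME}.

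The main obstacle I anticipate is the cross/externality term: unlike a plain VCG payment, whose externality can exceed any fixed budget, the bound here must be extracted from the extra budget-type constraint \eqref{eqn:necessary_condition} built into $\mathcal{A}$ and from the correct orientation of the with-$k$ versus anticipating-$k$ comparison. Getting that comparison right and handling its interaction with the subtracted $r_{ij}$ is the delicate step, whereas the self-term bound is essentially a direct reading of \eqref{eqn:necessary_condition} combined with the reservation property of $r_{kj}$.
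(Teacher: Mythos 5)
Your overall skeleton matches the paper's: bound the payment by combining (i) a welfare-optimality comparison between the realized adapted assignment and the assignment computed anticipating traveler $k$, and (ii) the budget-type constraint \eqref{eqn:necessary_condition} built into $\mathcal{A}$. But the way you split the payment into ``self'' and ``cross'' terms creates two sub-claims that are each unestablished and, worse, in tension with each other. For the cross terms you assert $\sum_{i\ne k}\sum_j\tilde a_{ij;k}(v_{-k})(v_{ij}-r_{ij})\le\sum_{i\ne k}\sum_j\tilde a_{ij}(v)(v_{ij}-r_{ij})$, a comparison restricted to $\mathcal{I}\setminus\{k\}$ on both sides. That is not what optimality gives: $\tilde a_{\cdot\cdot;k}(v_{-k})$ maximizes precisely the left-hand side over $\mathcal{A}_k$, so if anything the restricted comparison tends the other way. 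What the paper actually proves (its inequality \eqref{eqn:thmME_beforeLast}) is the unrestricted comparison $\sum_{i\in\mathcal{I}}\sum_j\tilde a_{ij}(v)(v_{ij}-r_{ij})\ge\sum_{i\ne k}\sum_j\tilde a_{ij;k}(v_{-k})(v_{ij}-r_{ij})$, obtained by zero-extending $\tilde a_{\cdot\cdot;k}(v_{-k})$ to a feasible point $\alpha\in\mathcal{A}$ (setting $\alpha_{kj}=0$) and invoking optimality of $\tilde a(v)$ over $\mathcal{A}$; the left side necessarily retains $k$'s own term $\sum_j\tilde a_{kj}(v)(v_{kj}-r_{kj})$, and your restricted version would follow from it only if that term were nonpositive.

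For the self terms you need $\sum_j\tilde a_{kj}(v)r_{kj}\le\sum_j\tilde a_{kj}(v)v_{kj}$, i.e.\ $r_{kj}\le v_{kj}$ on the support of the adapted assignment. The reservation property you cite ($\bar a_{ij}r_{ij}=\bar a_{ij}v_{ij}^{\text{worst}}$, used in the proof of Theorem \ref{thm:IR}) is established only for the nominal assignment $\bar a_{ij}$, not for $\tilde a_{kj}(v)$, so this step is a gap --- and note it requires exactly the opposite sign of what your cross-term bound would need. The resolution is to not decouple the two pieces: apply the unrestricted inequality \eqref{eqn:thmME_beforeLast} directly to the payment formula; the residual term $\sum_j\tilde a_{kj}(v)(v_{kj}-r_{kj})$ it leaves behind converts the $r_{kj}$ in your self-term into $v_{kj}$, after which \eqref{eqn:necessary_condition} (with $\tilde v=v$ and $i=k$) gives $p_k(v)\le\sum_j\tilde a_{kj}(v)v_{kj}+\sum_j\bar a_{kj}r_{kj}-\sum_j\bar a_{kj}\xi_5^k\gamma_{kj}\le b_k$ with no sign assumption on $v_{kj}-r_{kj}$ at all.
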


\begin{proof}
    The mobility payment \eqref{eqn:mobility_payment} of any traveler $k$ is
        \begin{multline}\label{thm2:eqnfirst}
            p_k(v) = \sum_{j \in \mathcal{J}} \tilde{a}_{k j}(v) r_{k j} + \sum_{j \in \mathcal{J}} \bar{a}_{k j} r_{k j} - \sum_{j \in \mathcal{J}} \bar{a}_{k j} \xi_5 ^ k \gamma_{k j} \\
            + \sum_{i \in \mathcal{I} \setminus \{k\}} \sum_{j \in \mathcal{J}} \tilde{a}_{i j; k}(v_{- k}) (v_{i j} - r_{i j}) \\
            - \sum_{i \in \mathcal{I}} \sum_{j \in \mathcal{J}} \tilde{a}_{i j}(v) (v_{i j} - r_{i j}).
        \end{multline}
    Next, we formulate the following optimization problem:
        \begin{multline}\label{eqn:estimate:a^v-k}
            (\tilde{a}_{i j; k} (v_{- k}))_{i \in \mathcal{I} \setminus \{k\}, j \in \mathcal{J}} \\
            = \arg \max_{\tilde{a} \in \mathcal{A}_k} \sum_{i \in \mathcal{I} \setminus \{k\}} \sum_{j \in \mathcal{J}} \tilde{a}_{i j} (v_{i j} - r_{i j}),
        \end{multline}
    where $\tilde{a}_{i j; k}(v_{- k}) \in \mathcal{A}_k$ with constraints:
        \begin{align}
            \sum_{i \in \mathcal{I} \setminus \{k\}} \tilde{a}_{i j} & \leq 1 - \sum_{i \in \mathcal{I}} \bar{a}_{i j}, \quad \forall j \in \mathcal{J} \\
            \sum_{j \in \mathcal{J}} \tilde{a}_{i j} \tilde{v}_{i j} & \leq b_i - \sum_{j \in \mathcal{J}} \bar{a}_{i j} r_{i j}, \label{eqn:thmME_one}
        \end{align}
    where \eqref{eqn:thmME_one} holds for all $\tilde{v} \in \mathcal{V}$ and $i \in \mathcal{I} \setminus \{k\}$.
    Thus, we have $(\tilde{a}_{i j; k} (v_{- k}))_{i \in \mathcal{I} \setminus \{k\}, j \in \mathcal{J}} \in \mathcal{A}_k$, which yields
        \begin{align}
            \sum_{i \in \mathcal{I} \setminus \{k\}} \tilde{a}_{i j; k}(v_{- k}) & \leq 1 - \sum_{i \in \mathcal{I}} \bar{a}_{i j}, \quad \forall j \in \mathcal{J}, \\
            \sum_{j \in \mathcal{J}} \tilde{a}_{i j; k}(v_{- k}) \tilde{v}_{i j} & \leq b_i - \sum_{j \in \mathcal{J}} \bar{a}_{i j} r_{i j}, \label{eqn:thmME_two}
        \end{align}
    where \eqref{eqn:thmME_two} holds for all $\tilde{v} \in \mathcal{V}$ and for all $i \in \mathcal{I} \setminus \{k\}$. If we construct an assignment $\alpha = (\alpha_{i j})$ such that
        \begin{equation}
            \alpha_{i j} =
                \begin{cases}
                    \tilde{a}_{i j; k}(v_{- k}), & \quad \forall i \in \mathcal{I} \setminus \{k\}, \quad \forall j \in \mathcal{J}, \\
                    0, & \quad i = k, \quad \forall j \in \mathcal{J},
                \end{cases}
        \end{equation}
    then we have, for all $j \in \mathcal{J}$,
        \begin{align}
            \sum_{i \in \mathcal{I}} \alpha_{i j} & = \sum_{i \in \mathcal{I} \setminus \{k\}} \tilde{a}_{i j; k}(v_{- k}) \leq 1 - \sum_{i \in \mathcal{I}} \bar{a}_{i j}, \\
            \sum_{j \in \mathcal{J}} \alpha_{i j} \tilde{v}_{i j} & = \sum_{j \in \mathcal{J}} \tilde{a}_{i j; k}(v_{- k}) \tilde{v}_{i j} \leq b_i - \sum_{j \in \mathcal{J}} \bar{a}_{i j} r_{i j}, \label{eqn:thmME:three}
        \end{align}
    where \eqref{eqn:thmME:three} holds for all $\tilde{v} \in \mathcal{V}$ and for all $i \in \mathcal{I} \setminus \{k\}$. Similarly, for traveler $k$, it follows from \eqref{Equation:Problem4-ObjectiveFunction}-\eqref{Constraint:Problem4-Third} that
        \begin{equation}
            \sum_{j \in \mathcal{J}} \alpha_{k j} \tilde{v}_{k j} = 0 \leq b_k - \sum_{j \in \mathcal{J}} \bar{a}_{k j} r_{k j}.
        \end{equation}
    Thus, $\alpha \in \mathcal{A}$, and so we have the following inequality:
        \begin{align}\label{eqn:thmME_beforeLast}
            \sum_{i \in \mathcal{I}} \sum_{j \in \mathcal{J}} \tilde{a}_{i j}(v) (v_{i j} - r_{i j}) \geq \sum_{i \in \mathcal{I}} \sum_{j \in \mathcal{J}} \alpha_{i j} (v_{i j} - r_{i j}) \notag \\
            = \sum_{i \in \mathcal{I} \setminus \{k\}} \sum_{j \in \mathcal{J}} \tilde{a}_{i j; k}(v_{- k}) (v_{i j} - r_{i j}).
        \end{align}
    From \eqref{thm2:eqnfirst}, we subtract $\sum_{j \in \mathcal{J}} \tilde{a}_{k j}(v_k, v_{- k}) v_{k j}$ to obtain
        \begin{multline}\label{eqn:thm2_sufficient}
            p_k(v) = \sum_{j \in \mathcal{J}} \tilde{a}_{k j}(v) r_{k j} + \sum_{j \in \mathcal{J}} \bar{a}_{k j} r_{k j} - \sum_{j \in \mathcal{J}} \bar{a}_{k j} \xi_5 ^ k \gamma_{k j} \\
            + \sum_{i \in \mathcal{I} \setminus \{k\}} \sum_{j \in \mathcal{J}} \tilde{a}_{i j; k}(v_{- k}) (v_{i j} - r_{i j}) \\
            - \sum_{i \in \mathcal{I}} \sum_{j \in \mathcal{J}} \tilde{a}_{i j}(v) (v_{i j} - r_{i j}).
        \end{multline}
    From \eqref{eqn:thmME_beforeLast} and \eqref{eqn:thm2_sufficient}, $p_k(v) \leq \sum_{j \in \mathcal{J}} \tilde{a}_{k j}(v) v_{k j} + \sum_{j \in \mathcal{J}} \bar{a}_{k j} r_{k j} - \sum_{j \in \mathcal{J}} \bar{a}_{k j} \xi_5 ^ k \gamma_{k j}$, thus $p_k(v) \leq b_k$.
\end{proof}

\begin{theorem}\label{thm:IR}
    The proposed framework induces all travelers to voluntary participate under the pricing mechanism \eqref{eqn:mobility_payment}, and thus satisfy the last necessary property for mobility equity.
\end{theorem}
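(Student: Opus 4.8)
The plan is to recognize that voluntary participation \eqref{constraint:IR} for traveler $k$ is precisely the non-negativity of the realized utility, $u_k(v_k, v_{-k}) \geq 0$, and then to establish this by reusing the utility decomposition already obtained in the proof of Theorem \ref{thm:IC} together with the optimality inequality \eqref{eqn:thmME_beforeLast} from Theorem \ref{thm:ME}. By Definition \ref{defn:utility_function} and Definition \ref{defn:final_assignment}, and after substituting the payment \eqref{eqn:mobility_payment}, the proof of Theorem \ref{thm:IC} already exhibits
\[ u_k(v_k, v_{-k}) = \Big( \sum_{i \in \mathcal{I}} \sum_{j \in \mathcal{J}} \tilde{a}_{ij}(v)(v_{ij} - r_{ij}) - \sum_{i \in \mathcal{I} \setminus \{k\}} \sum_{j \in \mathcal{J}} \tilde{a}_{ij;k}(v_{-k})(v_{ij} - r_{ij}) \Big) + \sum_{j \in \mathcal{J}} \bar{a}_{kj}\big(v_{kj} - r_{kj} + \xi_5^k \gamma_{kj}\big). \]
Hence proving VP reduces to showing this right-hand side is non-negative.

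First I would dispose of the bracketed social-welfare difference. The temporary assignment $\tilde{a}_{ij;k}(v_{-k})$ and the realized adapted assignment $\tilde{a}_{ij}(v)$ are maximizers of the same parameterized social-welfare objective over the feasible sets $\mathcal{A}_k$ and $\mathcal{A}$, and the construction in Theorem \ref{thm:ME} shows that any $\mathcal{A}_k$-feasible point extends, by zero on coordinate $k$, to an $\mathcal{A}$-feasible point. This is exactly the content of \eqref{eqn:thmME_beforeLast}, namely
\[ \sum_{i \in \mathcal{I}} \sum_{j \in \mathcal{J}} \tilde{a}_{ij}(v)(v_{ij} - r_{ij}) \geq \sum_{i \in \mathcal{I} \setminus \{k\}} \sum_{j \in \mathcal{J}} \tilde{a}_{ij;k}(v_{-k})(v_{ij} - r_{ij}), \]
so the bracketed term is non-negative and the claim collapses to the purely nominal inequality
\[ \sum_{j \in \mathcal{J}} \bar{a}_{kj}\big(v_{kj} - r_{kj} + \xi_5^k \gamma_{kj}\big) \geq 0. \]

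To close this last inequality I would expand $r_{kj} = \xi_1^k + \xi_2^j + \xi_5^k v_{kj}^{\text{worst}} + \xi_6^k v_{kj}^{\text{worst}}$ and use that $\gamma_{kj} = \arg\min_{\tilde{v} \in \mathcal{V}} \sum_{j} \bar{a}_{kj}\tilde{v}_{kj}$ selects, on the support of the nominal assignment, the worst-case valuation against which the reservation payment was calibrated, so that the $\xi_5^k$ contributions cancel and it remains to verify $\sum_{j} \bar{a}_{kj}\big(v_{kj} - \xi_1^k - \xi_2^j - \xi_6^k v_{kj}^{\text{worst}}\big) \geq 0$. I would obtain this from dual feasibility in Lemma \ref{lem:dual_problem} together with complementary slackness between $\bar{a}_{kj}$ and the first dual constraint, invoking the defining property of $r_{kj}$ as the minimum payment at which traveler $k$ is assigned service $j$, and the fact that $v^{\text{worst}}$ is the minimizing profile produced by the worst-case program \eqref{Equation:Problem4-ObjectiveFunction} whose budget constraint \eqref{constraint:ME} keeps the calibrated payments on the nominal support within the realized valuations.

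The main obstacle is exactly this final nominal inequality: the social-welfare part is delivered almost verbatim by \eqref{eqn:thmME_beforeLast}, whereas certifying $r_{kj} - \xi_5^k \gamma_{kj} \leq v_{kj}$ on the support of $\bar{a}$ requires carefully matching the algebraic form of the reservation payment to the stationarity and complementary-slackness conditions of the dual, and in particular controlling the truthfulness duals $\xi_4$ that enter the first dual constraint, rather than any fresh combinatorial estimate. Once this is settled, combining it with the non-negative social-welfare difference yields $u_k(v_k, v_{-k}) \geq 0$, i.e., \eqref{constraint:IR} holds; together with Theorems \ref{thm:IC} and \ref{thm:ME}, this establishes all three requirements of mobility equity in Definition \ref{defn:equity}.
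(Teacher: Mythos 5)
Your skeleton is exactly the paper's: you take the utility decomposition already derived in the proof of Theorem \ref{thm:IC}, discard the social-welfare difference using \eqref{eqn:thmME_beforeLast}, and reduce voluntary participation to the nominal inequality $\sum_{j \in \mathcal{J}} \bar{a}_{k j}\bigl(v_{k j} - r_{k j} + \xi_5^k \gamma_{k j}\bigr) \geq 0$. Up to that point the two arguments coincide. Where you diverge is in closing this last inequality, and there your route is both harder than necessary and not actually completed. You propose to expand $r_{k j}$, cancel the $\xi_5^k$ contributions against $\gamma_{k j}$, and then certify the pointwise bound $r_{k j} - \xi_5^k \gamma_{k j} \leq v_{k j}$ on the support of $\bar{a}$ via dual feasibility, complementary slackness, and control of the truthfulness duals $\xi_4$ --- and you yourself flag this as the unresolved obstacle. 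The paper avoids all of this: it never needs the $\xi_5^k$ terms to cancel, because $\sum_{j} \bar{a}_{k j} \xi_5^k \gamma_{k j}$ enters the utility with a plus sign and is non-negative, so it can simply be dropped from the lower bound. What remains is only $\sum_{j} \bar{a}_{k j} v_{k j} \geq \sum_{j} \bar{a}_{k j} r_{k j}$, which follows from the two facts the paper extracts from Lemma \ref{lem:dual_problem}: $\bar{a}_{i j} r_{i j} = \bar{a}_{i j} v_{i j}^{\text{worst}}$ and $\sum_{j} \bar{a}_{i j} r_{i j} \leq \sum_{j} \bar{a}_{i j} v_{i j}'$ for every valuation profile, i.e., $v^{\text{worst}}$ minimizes $\sum_{j} \bar{a}_{k j} \tilde{v}_{k j}$ so the reservation payments on the nominal support are dominated by the realized valuations. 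You should replace your cancellation-plus-complementary-slackness plan with this weaker aggregate bound; the stronger per-service inequality you are trying to prove is not needed, and as written your proposal leaves the decisive step open.
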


\begin{proof}
    By Theorem \ref{thm:IC}, we have 
        \begin{multline}
            u_i(v_k, v_{- k}) = \sum_{i \in \mathcal{I}} \sum_{j \in \mathcal{J}} \tilde{a}_{i j}(v_k, v_{- k}) (v_{i j} - r_{i j}) \\
            - \sum_{i \in \mathcal{I} \setminus \{k\}} \sum_{j \in \mathcal{J}} \tilde{a}_{i j; k}(v_{- k}) (v_{i j} - r_{i j}) \\
            + \sum_{j \in \mathcal{J}} \bar{a}_{k j} v_{k j} - \sum_{j \in \mathcal{J}} \bar{a}_{k j} r_{k j} + \sum_{j \in \mathcal{J}} \bar{a}_{k j} \xi_5 ^ k \gamma_{k j},
        \end{multline}
    which leads to $u_i(v_k, v_{- k}) \geq \sum_{j \in \mathcal{J}} \bar{a}_{k j} v_{k j} - \sum_{j \in \mathcal{J}} \bar{a}_{k j} r_{k j} + \sum_{j \in \mathcal{J}} \bar{a}_{k j} \xi_5 ^ k \gamma_{k j}$, where we have used \eqref{eqn:thmME_beforeLast}. From Lemma \ref{lem:dual_problem} it follows straightforwardly that
        \begin{align}
            \sum_{j \in \mathcal{J}} \bar{a}_{i j} r_{i j} - \sum_{j \in \mathcal{J}} \bar{a}_{i j} v_{i j} ' \leq 0, \quad v_{i j} ' \in \mathcal{V}, \quad \forall i \in \mathcal{I}, \\
            \bar{a}_{i j} r_{i j} = \bar{a}_{i j} v_{i j} ^ {\text{worst}}, \quad \forall i \in \mathcal{I}, \quad \forall j \in \mathcal{J}.
        \end{align}
    Thus, we have
        \begin{align}
            \sum_{j \in \mathcal{J}} \bar{a}_{k j} v_{k j} & \geq \sum_{j \in \mathcal{J}} \bar{a}_{k j} z_{k j} = \sum_{j \in \mathcal{J}} \bar{a}_{k j} r_{k j} \\
            & \geq \sum_{j \in \mathcal{J}} \bar{a}_{k j} r_{k j} - \sum_{j \in \mathcal{J}} \bar{a}_{k j} \xi_5 ^ k \gamma_{k j}.
        \end{align}
    Note that $\bar{a}_{k j}$, $\xi_5 ^ k$, and $\gamma_{i j}$ are non-negative. Thus, we have, for any traveler $i \in \mathcal{I}$, $u_i(v_k, v_{- k}) \geq 0$.
\end{proof}

\section{Discussion}
\label{Section:Conclusion}

\subsection{Implementation}

In this subsection, we outline how our proposed framework can be potentially implemented by considering an example of a major metropolitan area with an extensive road and public transit infrastructure. For example, several key areas in Boston, MA are connected by roads, buses, light rail, and bikes, thus any traveler has access to any of these four modes of transportation. By applying the MaaS concept, a social planner (e.g., central computer) offers travel services (e.g., navigation, location, booking, payment) to all passing travelers at travel hub locations (e.g., train stations with bus stops and taxi waiting line). Information is shared among all travelers via a ``mobility app," which allows them to access the services. Travelers pay for their travel while submitting their individual budget and valuations via a ``preferences" questionnaire within the app. Note that each mode of transportation offers different benefits in utility (e.g., a car is more convenient than a bus and is expected to be in high demand). This justifies our modeling choice of each traveler having valuations for each mobility service. Our design of the payments \eqref{eqn:mobility_payment} guarantee that no traveler has an incentive to misreport these preferences (Theorem \ref{thm:IC}). In addition, travelers are incentivized to use the mobility app multiple times for their travels and interact with each other more than once (Theorem \ref{thm:IR}). Hence, our framework provides an efficient and fair way for travelers to travel using different modes of transportation while competing with many other travelers and pay a fare/toll always within their individual budget using the mobility app (Theorem \ref{thm:ME}).

\subsection{Concluding Remarks}

In this paper, we provided a game-theoretic framework for a multi-modal mobility system where travelers can travel using different modes of transportation and each has a different and unique travel budget. Our goal in this paper was to ensure economic sustainability by maximizing the worst-case revenue of the mobility system under the constraints of mobility equity, which we defined explicitly as truthfulness, voluntary participation, and budget fairness. We proved that our framework ensures budget fairness in the sense that no budget is violated. Under informational asymmetry, we showed that no traveler has an incentive to misreport and they voluntarily participate. Thus, our framework satisfies mobility equity by ensuring access to mobility to all travelers. Ongoing work includes relaxing our assumption of linearity in the utility functions and also investigating our model under the prospect theory behavioral model \cite{chremos2022Prospect}.

%
%
%
%
%

%
%
%
%
%

\bibliographystyle{IEEEtran}
\bibliography{references}

\end{document}